\newtheorem{thm}{Theorem}
\newcommand{\beq}{\begin{equation}}
\newcommand{\eeq}{\end{equation}}
\newcommand{\ben}{\begin{enumerate}}
\newcommand{\een}{\end{enumerate}}
\newcommand{\comment}[1]{}
\title{Next Steps for the Colorado Risk-Limiting Audit (CORLA) Program}
\author{
   Mark Lindeman\\
   Neal McBurnett\\
   Kellie Ottoboni\\
   Philip B.~Stark
}
\date{Version: \today}
\begin{document}
\maketitle

\begin{abstract}
Colorado conducted risk-limiting tabulation audits (RLAs) across the state in 2017,
including both ballot-level comparison audits and ballot-polling audits.
Those audits only covered contests restricted to a single county;
methods to efficiently audit contests that cross county boundaries
and combine ballot polling and ballot-level comparisons have not been available.

Colorado's current audit software (RLATool) needs to be improved to audit
these contests that cross county lines and to audit small contests efficiently.

This paper addresses these needs. 
It presents extremely simple but inefficient methods, more efficient methods
that combine ballot polling and ballot-level comparisons using stratified samples,
and methods that combine ballot-level comparison and
variable-size batch comparison audits in a way that does not require stratified
sampling.

We conclude with some recommendations, and illustrate our recommended method
using examples that compare them to existing approaches.
Exemplar open-source code and interactive Jupyter notebooks are provided
that implement the methods and allow further exploration.
\end{abstract}

\noindent
\textbf{Acknowledgements.}
We are grateful to Ronald L.~Rivest for helpful conversations and suggestions.

\section{Introduction}
A risk-limiting audit (RLA) of an election is a procedure that
has a known, pre-specified minimum chance of correcting the electoral outcome if the outcome
is incorrect---that is, if the reported outcome differs from the outcome that a full manual
tabulation of the votes would find. 
RLAs require a durable, voter-verifiable record of voter intent, such as paper ballots,
and they assume that this audit trail is sufficiently complete and accurate that a full hand
tally would show the true electoral outcome.
That assumption is not automatically satisfied: a \emph{compliance audit}
\citep{starkWagner12} 
is required.

Risk-limiting audits are generally incremental: they examine more ballots, or batches of ballots,
until either (i)~there is strong statistical evidence that a full hand tabulation would confirm the outcome,
or (ii)~the audit has led to a full hand tabulation, the result of which should become the official
result.

RLAs have been piloted in California, Colorado, and Ohio, and a test of
RLA procedures has been conducted in Arizona.
RLA bills are being drafted or are already under consideration in California,
Virginia, Washington, and other states.
A number of laws have either allowed or mandated risk-limiting audits,
including California AB~2023 (Salda\~{n}a), SB~360 (Padilla), and AB~44 (Mullin);
Rhode Island SB~413A and HB~5704A; and Colorado Revised Statutes (CRS)~1-7-515.

CRS~1-7-515 required 
Colorado to implement risk-limiting audits beginning in 2017.
(There are provisions to allow the Secretary of State to exempt some counties.)
The first set of coordinated risk-limiting election audits across the state took place in Colorado in November, 2017.\footnote{%
 See \url{https://www.sos.state.co.us/pubs/elections/RLA/2017RLABackground.html}
}

Colorado's ``uniform voting system'' program\footnote{%
\url{https://www.sos.state.co.us/pubs/elections/VotingSystems/UniformVotingSystem.html}
} led
many Colorado counties to purchase (or to plan to purchase) voting systems
that are auditable at the ballot level: those systems export cast vote records (CVRs)
for individual ballots in a manner that allows the corresponding paper ballot to be identified,
and conversely, make it possible to find the CVR corresponding to any
particular paper ballot.
We call counties that have such systems ``CVR'' counties.
It is estimated that by June, 2018, 98.2\% of active Colorado voters will be in CVR counties.
CVR counties can perform ``ballot-level comparison audits,'' \citep{lindemanStark12} 
which are currently the
most efficient approach to risk-limiting audits in that they require examining
fewer ballots than other methods do, when the outcome of the contest under audit 
is in fact correct.

Other counties (``legacy'' or ``no-CVR'' counties) 
have systems that do not allow auditors to check how the system
interpreted voter intent for individual ballots.
Their election results can still be audited, provided their voting systems
create a voter-verifiable paper trail (\emph{e.g.}, voter-marked paper ballots) that is
conserved to ensure that it remains accurate and intact, and organized well enough
to permit ballots to be selected at random.
Pilot audits in California suggest that the most efficient way to audit such systems
is by ``ballot-polling'' \citep{lindemanEtal12,lindemanStark12} 
(in contrast to ``batch-level comparisons,'' for example).

There is currently no literature on how to perform risk-limiting audits 
of contests that include CVR counties and no-CVR counties by combining
ballot polling and ballot-level comparisons.
Existing methods would either require all counties to use the lowest
common denominator, ballot-polling (which does not take advantage of the CVRs,
and thus is expected to require more auditing than a method that does take
advantage of the CVRs), or would
require no-CVR counties to perform batch-level comparisons, which were found in
California to be (generally) less efficient than ballot-polling audits.%
\footnote{%
  See~\cite{Rivest-2018-bayesian-tabulation-audits}
  for a different (Bayesian) approach to auditing contests that include both CVR counties
  and no-CVR counties.
  }

The open-source audit software used for Colorado's 2017 audits,
RLATool (\url{https://github.com/FreeAndFair/ColoradoRLA/}),
needs to be improved to audit
contests that cross county lines and to audit small contests efficiently.

First, the current version (1.1.0) of RLATool
needs to be modified to recognize and group together contests that cross jurisdictional
boundaries; currently, it treats every contest as if it were entirely
contained in a single county.
Margins and risk limits apply to entire contests, not to the portion of a contest
included in a county.
RLATool also does not allow the user to select the sample size, nor does
it directly allow an unstratified random sample to be drawn across counties.
Second, to audit a contest that includes voters in ``legacy'' counties
(counties with voting systems that cannot export cast vote records)
and voters in counties with newer systems, new statistical methods are needed to
keep the efficiency of ballot-level comparison audits that the newer systems
afford.
Third, auditing contests that appear only on a subset of ballots can
be made much more efficient if the sample can be drawn from just those ballots
that contain the contest.
While allowing samples to be restricted to ballots reported to contain a particular
contest is not essential in the short run, it will be necessary
eventually to make it feasible to audit smaller contests.

This document focuses on near-term requirements for risk-limiting audits in Colorado.
Section~\ref{sec:crude} presents a number of crude approaches
that could be implemented easily
but might require examining substantially more ballots.
Section~\ref{sec:variable}
presents an approach based on comparison audits with different batch sizes.
This approach is statistically elegant and relatively efficient, but might require changing how counties
handle their ballots.
Section~\ref{sec:hybrid} presents our recommended approach, which combines ballot-level
comparisons in counties that can perform them with ballot-polling in the no-CVR counties.
All the approaches require new software, including at least minor modifications to RLATool.
We provide example software implementing the risk calculations for
our recommended approach as a Python Jupyter notebook.\footnote{%
 See \url{https://github.com/pbstark/CORLA18}.
}
Section~\ref{sec:subcollections}
describes how audit efficiency could be improved in CVR counties by combining CVR
data with data from Colorado's voter registration system, SCORE.\footnote{%
  SCORE is Colorado's voter registration system, which also tracks who voted.
  See \url{https://www.sos.state.co.us/pubs/elections/SCORE/SCOREhome.html}.
} 
Sections~\ref{sec:comparisonError} and~\ref{sec:ballotPollError} explain the recommended
modifications to ballot-level comparison and ballot-polling audits, respectively. 
Section~\ref{sec:recommendations} summarizes our recommendations and
considerations for implementation.

\subsection{Priorities for Colorado audits}

Auditing efficiency is controlled in part by how well the audit can limit the sample to ballots that
contain the contests under audit.
Some contests are on (essentially) every ballot, for instance the governor's race.
Others, such as mayoral contests, may appear on only a small fraction of ballots cast in
a county.
Partisan primaries---even for statewide office---are somewhere in between,
because in general no single party's primary appears on every ballot cast in the state.
Thus, either we accept reduced 
efficiency for the sake of simplicity by continuing to sample ballots uniformly from within 
counties (or collections of counties), or we develop a way to
focus the auditing on the ballots that contain the contest.
The latter requires external information, e.g., from SCORE,
as discussed below.

Moreover, party primaries for statewide offices (and perhaps other contests) will
include CVR counties and no-CVR counties, so we need a method to audit
across both kinds of voting technology. 

This report addresses both issues, providing 
options for effectively auditing  
heterogeneous 
voting technology, varying in efficiency, complexity, and on whom any additional audit burden falls.

\section{Crude (and unpleasant) approaches} \label{sec:crude}


Here and generally throughout the paper, we 
discuss auditing a single contest at a time, although the same sample can be used to audit
more than one contest and there are ways of combining audits of different contests into
a single process \citep{stark09c,stark10d}.
We use terminology drawn from a number of papers; the key reference is \citet{lindemanStark12}.
An \emph{overstatement error} is an error that caused the margin between \emph{any} reported
winner and \emph{any} reported loser to appear larger than it really was.
An \emph{understatement error} is an error that caused the margin between \emph{every} reported
winner and \emph{every} reported loser to appear to be smaller than it really was.

\subsection{Hand count the legacy counties}
The simplest approach to combining legacy counties with CVR counties is to require every
legacy county to do a full hand count of the primaries, and to conduct a 
ballot-level comparison audit in CVR counties, based on contest margins adjusted for
the results of the manual tallies in the CVR counties.
For instance, imagine a contest with two candidates, reported winner $w$ and reported loser $\ell$.
Suppose the total number of reported votes for candidate~$w$ is $V_w$ 
and the total for candidate~$\ell$ is $V_\ell$, so that $V_w > V_\ell$, since 
$w$ is the reported winner.
Suppose that a full manual tally of the votes in the legacy counties shows $V_w'$ votes for $w$ and
$V_\ell'$ votes for $\ell$.
Suppose that a total of $N$ ballots were cast in the CVR counties.
Then the \emph{diluted margin}  for the comparison audit in the CVR counties is defined to be
$[(V_w-V_w')-(V_\ell-V_\ell')]/N$.
Requiring a full hand count in the legacy counties has obvious disadvantages, except perhaps in very 
close contests where ballot polling is not efficient. (But it does have the advantage of not forcing CVR 
counties to do additional auditing to compensate for the legacy counties.)

\subsection{Subtract error bounds for the legacy counties from vote totals}
If ballot accounting and SCORE data can provide good upper bounds on the number of ballots cast in
each contest in legacy counties, there are simple upper bounds on the total
possible overstatement error each legacy county could contribute to the overall contest
results; those can be subtracted from the overall margin (as in the previous subsection) and the
remainder of the contests can be audited in CVR counties against the adjusted margins.
For instance, consider a primary that appears on $N$ ballots in a legacy counties.
Suppose that in legacy counties, the overall, statewide contest winner, $w$, is reported to have received $V_w'$ votes, and some loser, $\ell$, is reported to have received $V_\ell'$ votes. 
(Note that $V_\ell'$ could be greater than $V_w'$: $w$ is not necessarily the reported winner in the legacy counties.)
Then the most overstatement error that the county could possibly have in determining whether
$w$ in fact beat $\ell$ is if every reported undervote, invalid vote, or vote for a different candidate, $t$, had 
in fact been a vote for $\ell$ (producing a 1-vote overstatement), and every vote reported for 
$w$ was in fact a vote for $\ell$ (producing a 2-vote overstatement).
The reduction in the margin that would produce is 
$N - V_w' - V_\ell' + 2V_w' = N + V_w' - V_\ell'$ votes.

Whereas the previous approach places the auditing burden created by obsolescent equipment entirely on 
the legacy counties, this approach places it entirely upon the CVR counties. Also, in a close contest, it 
could require a full hand count in every county that might not otherwise be necessary.

\subsection{Treat legacy counties as if every ballot selected from them for audit has a two-vote overstatement}\label{sec:two-vote-over}
A third simple-but-pessimistic approach is to sample uniformly from all counties as if one
were performing a ballot-level comparison audit everywhere,  but to 
treat any ballot
selected from a legacy county as a two-vote overstatement. This approach has the same disadvantages
as the previous approach.

\section{Variable batch sizes} \label{sec:variable}

Another approach is to perform a comparison audit across all counties, but to use batches consisting
of more than one ballot (batch-level comparisons)
in legacy counties and batches consisting of a single ballot (ballot-level comparisons) in CVR counties.\footnote{%
 For majority and plurality elections, including those in which voters can select more than one candidate,
  audits can be based on overstatement and understatement errors at the level of batches.
}
This requires that the no-CVR counties report vote subtotals
for physically identifiable batches.
If a county's voting system can only report subtotals by precinct but 
the county does not sort paper ballots by
precinct, this approach might require revising how the county handles its
paper; we understand that this is the case in many Colorado counties.

That said, many California counties that do not sort vote-by-mail (VBM)
ballots by precinct conduct the statutory 1\% audits by manually retrieving the ballots 
for just those precincts selected for audit from whatever physical batches they happen to be in: 
the situation is identical to that in Colorado.

Another solution is the ``Boulder-style'' batch-level audit,\footnote{%
 See \url{http://bcn.boulder.co.us/~neal/elections/boulder-audit-10-11/}.
}
which requires generating 
vote subtotals after each physical batch is scanned, and exporting those subtotals in machine-readable form.
That in turn may require using extra memory cards, repeatedly initializing and deleting tabulation databases,
or other measures that add complexity and opportunity for human error.

While those two approaches are laborious, they would provide a viable short-term solution,
especially combined with information from SCORE to check that the reported batch-level results contain the correct number of ballots for each contest under audit.
Moreover, it does not unduly increase the workload in CVR counties
to compensate for legacy equipment.

This kind of variable-batch-size comparison audit approach would require modifying or augmenting
RLATool in several ways: 

\begin{enumerate}

  \item The CVR reporting tool would need to be modified to allow no-CVR counties to
report batch-level results in a manner analogous to how CVR counties report
ballot-level results, or an external tool would need to be provided.

  \item The sampling
algorithm would have to allow sampling batches---and sampling them with unequal probability,
because efficient batch-level audits involve sampling batches with probability proportional
to a bound on the possible overstatement error in the batch.
It would also need to calculate the appropriate sampling probability for each batch (of whatever size).
Again, this could be accommodated using an external tool to draw the sample from legacy counties.

  \item The risk calculations would need to be modified. 
This, too, could be done with external software, with suitable provisions for capturing audit data
from RLATool or directly from legacy counties.
\end{enumerate}

None of these changes is enormous; the mathematics and statistics are already worked out
in published papers, and there is exemplar code for calculating the
batch-level error bounds, drawing the samples with probability proportional to an
error bound, and calculating the attained risk from the sample results.
Indeed, this is the method that was used in several of California's pilot audits,
including the audit in Orange County.
A derivation of a method for comparison audits with variable batch sizes is given below
in section~\ref{sec:comparisonError}.

\section{Stratified ``hybrid'' audits} \label{sec:hybrid}

Other approaches involve \emph{stratification}: partitioning the cast ballots
into non-overlapping groups and sampling independently from those groups.
One could stratify by county, but in general it is simpler and more efficient
statistically (i.e., results in auditing fewer ballots) to minimize the number of strata.
We consider methods that use two strata: CVR counties and no-CVR counties. 
Collectively, the ballots cast in CVR counties comprise one stratum and the ballots cast in 
legacy counties comprise a second stratum; every ballot cast in the contest is in 
exactly one of the two strata. 
We assume that the samples are drawn from the
two strata independently.

As explained below, these stratified ``hybrid'' audits require the specification of some additional parameters:
$\lambda_1$ for dividing the tolerable overstatement error up, and the strata risk limits $\{\alpha_s\}$.

\subsection{Partitioning the total permissible overstatement into strata}
The simplest approach to stratification involves partitioning the risk limit and the tolerable
overstatement error of the tabulation into
two pieces, one for the (pooled) CVR counties and one for the (pooled) no-CVR counties.
Let $V_{w\ell} > 0$ denote the contest-wide margin (in votes) of reported winner 
$w$ over reported loser
$\ell$.
Let $V_{w\ell,s}$ denote the margin (in votes) of reported winner $w$ over reported loser $\ell$
in stratum $s$. 
Note that $V_{w\ell,s}$ might be negative in one stratum.
Let $A_{w\ell}$ denote the margin (in votes)
of reported winner $w$ over reported loser $\ell$ that 
a full hand count of the entire contest would show, that is, the \emph{actual} margin rather
than the \emph{reported} margin.
Reported winner $w$ really beat reported loser $\ell$ if and only if $A_{w\ell} > 0$.
Define $A_{w\ell,s}$ to be the actual margin (in votes) of $w$ over $\ell$ in stratum $s$;
this too may be negative.

Let $\omega_{w\ell,s} \equiv V_{w\ell,s} - A_{w\ell,s}$ be the \emph{overstatement}
of the margin of $w$ over $\ell$ in stratum $s$.
Reported winner $w$ really beat reported loser 
$\ell$ if and only if $\omega_{w\ell} \equiv \omega_{w\ell,1} + \omega_{w\ell,2} < V_{w\ell}$.

Pick $\lambda_1 \in \Re$ and define $\lambda_2 = 1-\lambda_1$. These values partition the
total tolerable overstatement between the two strata:
If $\omega_{w\ell,1} < \lambda_1 V_{w\ell}$ \emph{and} 
$\omega_{w\ell,2} < \lambda_2 V_{w\ell}$, candidate $w$ really received more votes
than candidate $\ell$.
Some $(\lambda_1, \lambda_2)$ pairs 
can be ruled out \emph{a priori}, because (for instance) $\omega_{w\ell,s} \in [-2N_s, 2N_s]$,
where $N_s$ is the number of ballots cast in stratum $s$.
There are other simple, sharper bounds, sketched below.

The choice of $\lambda_1$ (which determines the tolerable overstatement in each
stratum), the strata risk limits $\{\alpha_s\}$, and details of the
audit procedures affect the workload and the overall risk limit.
(See section~\ref{sec:stratumRisk} and section~\ref{sec:recommendations}.)

For ballot-level comparison audits, auditing to ensure that $\omega_{w\ell,s} < \lambda_s V_{w\ell}$
is discussed in section~\ref{sec:comparisonError}.  It is a minor modification of the method
embodied in RLATool.

For ballot-polling audits, auditing to ensure that $\omega_{w\ell,s} < \lambda_s V_{w\ell}$ is discussed in section~\ref{sec:ballotPollError}.
Note that this requires a more substantial modification of the standard ballot-polling calculations,
because the standard calculations consider only the fraction of ballots with a vote for either 
$w$ or $\ell$ that contain a vote for $w$, while we need to make an inference about the 
difference between the number of votes for $w$ and the number of votes for $\ell$.
This introduces an additional unknown nuisance parameter, the number of ballots with votes for either
$w$ or $\ell$.

\subsubsection{Combining stratum-level risk limits}\label{sec:stratumRisk}
We audit to test the two hypotheses $\{\omega_{w\ell,s} \ge \lambda_s V_{w\ell}\}_{s=1}^2$, 
independently for the two strata.
If we reject \emph{both} hypotheses, we conclude that the contest outcome is correct;
otherwise, we manually re-tabulate the contest in one or both strata, depending on the
audit rules.
Those rules matter:
the two audits might need to be conducted to smaller risk limits individually than the desired
risk limit for the contest as a whole.

Recall that the samples are drawn independently from the two strata.
Pick $\alpha_1, \alpha_2 \in (0,\alpha)$.
(Below we discuss the choice further.)
We audit each stratum $s$ to test the hypothesis $\omega_{w\ell,s} \ge \lambda_s V_{w\ell}$ 
(the overstatement exceeds the tolerable overstatement) 
at risk limit $\alpha_s$,
as if it were its own election.
The audits can be conducted at the same time or sequentially; there is no coordination
between the audits unless one of them leads to a full hand count but the other does not:
see below.

How do these two stratum-level ``risk limits'' $\alpha_1$ and
$\alpha_2$ determine the 
overall risk that the audit will not correct the outcome if the outcome is wrong?
The overall risk depends on the rule for what we do if the audit in one stratum leads 
to a full manual tally of that stratum.

Here are the possibilities. Bear in mind that for the outcome to be wrong, 
at least one stratum must have a net overstatement
greater its tolerable overstatement: 
That is, if $\omega_{w\ell,1} + \omega_{w\ell,2} \ge V_{w\ell}$, then $\omega_{w\ell,1}\ge \lambda_1V_{w\ell}$
or $\omega_{w\ell,2}\ge \lambda_2V_{w\ell}$, or both.
If the tolerable overstatement is exceeded in only one stratum, $h$, then the chance that the 
stratum will be fully hand counted is at least $1-\alpha_h \ge 1- \alpha$.

If both $\omega_{w\ell,1} \ge \lambda_1V_{w\ell}$
and $\omega_{w\ell,2} \ge \lambda_2V_{w\ell}$, then the chance both 
are completely tabulated by hand is at least
$(1-\alpha_1)(1-\alpha_2)$, since the audit samples in the two strata are independent.

What should we do if the audit leads to a full tally in one stratum, $h$,
that reveals that indeed its tolerable overstatement has been exceeded, 
but the other audit has not led to a full tabulation, because it 
has not started, because it is still underway, or because it terminated without
a full hand tally?
We consider two options.
The simpler is to automatically require a full hand count of the other stratum. 
If the audit uses this rule, then we can take $\alpha_1 = \alpha_2 = \alpha$, 
and the procedure will have risk limit~$\alpha$. However, this rule creates the
possibility of requiring a full hand count in circumstances where it may seem
substantively superfluous. For instance, one can imagine an audit of a statewide
contest in which the tolerable overstatement in no-CVR counties is exceeded, 
yet the outcome still could be verified without a full hand count in the CVR counties.

The second approach is to adjust the tolerable overstatement in the other
stratum in light of the known manual tally $A_{w\ell,h}$
in the stratum $h$ that has been fully hand tallied:
we will test against the threshold 
$V_{w\ell} - A_{w\ell,h} \equiv \lambda_t' V_{w\ell}$, rather than 
the original value $\lambda_t V_{w\ell}$. (Because the overstatement
in stratum $h$ exceeded the tolerable overstatement, the updated tolerable
overstatement in stratum $t$ will be smaller than the original value.) 
Then to reject the new null hypothesis in stratum $t$ is to conclude that the 
overall outcome is correct.

If and when the hypothesis in stratum $t$ changes, the audit
in that stratum might be able to stop on the basis of the data already observed;
it might need to continue; or---if it had stopped based on the original threshold
$\lambda_t V_{w\ell}$---it might need to examine more ballots, possibly
continuing to a full hand tally.

We will now show in detail that this rule allows the contest to be audited at 
risk limit~$\alpha$ by selecting values of~$\alpha_1$ and~$\alpha_2$ that sum to
a bit more than~$\alpha$: specifically, such that $(1-\alpha_1)(1-\alpha_2) < 1-\alpha$.
For instance, suppose we want the overall risk limit to be 5\%. 
If we use a risk limit of 4\% in the no-CVR stratum and a risk limit of 1.04\% in the CVR stratum,
the overall risk limit is not larger than $1 - (1-\alpha_1)(1-\alpha_2) \equiv 1 - 0.96\times 0.9896 < 0.05$.

The statistical wrinkle is that adjusting for the manual tally in the hand-counted 
stratum $h$
changes the hypothesis being tested in the other stratum $t$
in a way that is itself random:
whether the original null $\omega_{w\ell,s} \ge \lambda_t V_{w\ell}$ is tested
or the new null $\omega_{w\ell,s} \ge \lambda_t' V_{w\ell}$ is tested depends on what the 
sample reveals in stratum $h$.
If the hypothesis does change, there is only one value possible for $\lambda_t'$---which
depends on the reported margin $V_{w\ell}$ and the count $A_{w\ell,h}$ in 
stratum $h$---but $\lambda_t'$ is unknown until $A_{w\ell,h}$ is known.

We assume that before any data are collected, the audit specifies two families of tests:
for each stratum $s$, a family of level-$\alpha_s$ tests of the null hypothesis that 
the overstatement in the stratum is greater than or equal to $c$, for all feasible values of $c$.
That is,
\beq
    \Pr \{ \mbox{reject hypothesis that } \omega_{w\ell,s} \ge 
    c_s || \omega_{w\ell,s} \ge c_s \} \le \alpha_s,
\eeq
for $s = 1, 2$, and all feasible $c_s$.
Moreover, we insist that the test depend on data only from ballots selected from its stratum.
Because the samples in the two strata are independent, for all feasible pairs $c_1, c_2$,
\begin{align} \label{eq:stratum_families}
    \Pr\{&\mbox{reject neither hypothesis } \omega_{w\ell,s} \ge c_s, \;\; s=1, 2 ||
       \omega_{w\ell,s} \ge c_s  \mbox{ for both } s=1, 2 \} \nonumber \\ 
       &= \prod_{s=1}^2 1 - \Pr \{ \mbox{reject hypothesis that } \omega_{w\ell,s} \ge c_s || \omega_{w\ell,s} \ge c_s \} \nonumber \\
       & \ge (1-\alpha_1)(1-\alpha_2).
\end{align}

What is the chance that the audit leads to a full hand tabulation if the outcome is incorrect?
One way the audit can lead to a full hand tally is if it leads to a full count in one stratum, 
the null hypothesis in the other stratum is changed, and the audit in the second 
stratum then proceeds to a full manual tally.
(There are other ways the audit can lead to a full hand tally, for instance, if neither
null hypothesis is rejected, but this is one way.)

If the outcome is wrong, there is at least one stratum in which the overstatement 
$\omega_{w\ell,s}$ 
exceeds the threshold $\lambda_s V_{w\ell}$.
Let $h$ be one such stratum. 
Then the chance the audit in stratum $h$ leads to a full manual tally in that stratum
is at least $(1-\alpha_h)$.
If the audit leads to a full manual tally in stratum~$h$ and the overall outcome is wrong,
then the (new) null hypothesis in the other stratum, $t$, must be true.
If we started to audit that new hypothesis \emph{ab initio}, the chance that we would reject it
would be at most $\alpha_t$, so the chance the audit would lead to a full hand count 
of stratum $t$ is at least $1-\alpha_t$.
The question is whether ``changing hypotheses'' could make that chance smaller.
The inequality \ref{eq:stratum_families} shows that it cannot: for any feasible pair of
overstatements, $c = (c_1, c_2)$, if $\omega_{w\ell,1} \ge c_1$ and $\omega_{w\ell,2} \ge c_2$,
the chance that neither the hypothesis $\omega_{w\ell,1} \ge c_1$ nor the hypothesis 
$\omega_{w\ell,2} \ge c_2$ will be rejected is at least $(1-\alpha_1)(1-\alpha_2)$.

And therefore, for this procedure, the chance that there will be a full hand count in both strata is at least 
$(1-\alpha_1)(1-\alpha_2)$ if the outcome is incorrect,
even if the probability were zero that both of the original audits would proceed to a full hand count.
The overall risk limit is thus not larger than $1 - (1-\alpha_1)(1-\alpha_2)$
.

\subsection{Constraining the total overstatement across strata}
A more statistically efficient approach to ensuring that the overstatement error in the 
two strata does not
exceed the margin is to try to constrain the \emph{sum} of the overstatement errors in the two
strata, rather than constrain the pieces separately:
there are many ways that the total overstatement could be less than $V_{w\ell}$ without
having the overstatement $\omega_{w\ell,s}$
in stratum $s$ less than $\lambda_s V_{w\ell}$, $s = 1, 2$.
To that end, imagine \emph{all} values $\lambda_1$.
If, for all such pairs, we can reject the hypothesis that the 
overstatement error in stratum~1 is greater than or equal to $\lambda_1 V_{w\ell}$ \emph{and} 
the overstatement error in stratum~2 is greater than or equal to $\lambda_2 V_{w\ell}$, then
we can conclude that the outcome is correct.

To test the conjunction hypothesis (i.e., that both of those null hypotheses are false), we use 
Fisher's combining function.
Let $p_s(\lambda)$ be the $p$-value of the hypothesis $\omega_{w\ell,s} \ge \lambda V_{w\ell}$.
If the null hypothesis that $\omega_{w\ell,1} \ge \lambda_1 V_{w\ell}$ and 
$\omega_{w\ell,2} \ge \lambda_2 V_{w\ell}$ is true, then the combination
\beq
   \chi(\lambda_1, \lambda_2) = -2 \sum_{s=1}^2 \ln p_s(\lambda_s)
\eeq
has a probability distribution that is dominated by the chi-square distribution with 4 degrees
of freedom.\footnote{%
   If the two tests had continuously distributed $p$-values, the distribution would be exactly
   chi-square with four degrees of freedom, but if either $p$-value has atoms when
   the null hypothesis is true, it is in general stochastically smaller.
   This follows from a coupling argument along the lines of Theorem~4.12.3 in \citet{grimmett01}.
}

Hence, if, for all $\lambda_1$ and $\lambda_2 = 1- \lambda_1$,
the combined statistic $\chi(\lambda_1, \lambda_2)$ is greater than the 
$1-\alpha$ quantile of the chi-square
distribution with 4 degrees of freedom, the audit can stop.

The calculation of $p_s(\lambda)$ 
uses the procedures discussed in 
sections~\ref{sec:comparisonError} and~\ref{sec:ballotPollError}.
\section{Sampling from subcollections} \label{sec:subcollections}

To audit contests that are contained on only a fraction of the ballots cast
in one or more counties efficiently requires the ability to sample from
just those ballots (or, at least, from a subset of all ballots that contains every such ballot).
Because the CVRs cannot be entirely trusted (otherwise, the audit would be superfluous),
we cannot rely on them to determine which ballots contain a given contest.
However, if we have independent knowledge of the number of ballots that
contain a given contest (e.g., from the SCORE system), then there are methods
that allow the sample to be drawn from ballots whose CVRs contain the contest
and still limit the risk rigorously.
See~\citet{benalohEtal11} and \citet{banuelosStark12} for details.

\section{Batch comparison audits of a tolerable overstatement in votes}
\label{sec:comparisonError}

In this section we expand previous comparison auditing work (already embodied in RLATool) to handle two new requirements.  The first allows the specification of the $\lambda$ parameters discussed in section~\ref{sec:hybrid}. The second handles batch-level auditing.

The first requirement requires that we consider auditing in a single stratum to test whether the overstatement of any margin
(in votes) exceeds some fraction $\lambda$ of the overall margin $V_{w\ell}$ between
reported winner $w$ and reported loser $\ell$.
If the stratum contains all the ballots cast in the contest, then for $\lambda = 1$, this 
would confirm the election outcome.
For stratified audits, we might want to test other values of $\lambda$, as described above.

In Colorado, comparison audits have been ballot-level (i.e., batches consisting of a single
ballot). 
This section also addresses the second requirement by deriving a method for batches of arbitrary size, which might be useful
for Colorado to audit contests that include CVR counties and legacy counties.
We keep the \emph{a priori} error bounds tighter than the ``super-simple'' 
method~\citep{stark10d}.
To keep the notation simpler, we consider only a single contest, but the 
MACRO test statistic \citep{stark09c,stark10d} automatically extends the result to 
auditing $C>1$ contests simultaneously.
The derivation is for plurality contests, including ``vote-for-$k$'' plurality contests.
Majority and super-majority contests are a minor 
modification~\citep{stark08a}.\footnote{%
  So are some forms of preferential and approval voting, such as Borda count, and
  proportional representation contests, such as D'Hondt~\citep{starkTeague14}.
  Changes for IRV/STV are more complicated.
}

\subsection{Notation}
\begin{itemize}
    \item  $\mathcal{W}$: the set of reported winners of the contest
    \item  $\mathcal{L}$: the set of reported losers of the contest
    \item  $N_s$ ballots were cast in all in the stratum. (The contest might not appear on all $N_s$ ballots.)
    \item  $P$ ``batches'' of ballots are in stratum $s$. A batch contains one or more ballots. Every ballot in stratum $s$ is in exactly one batch.
    \item  $n_p$: number of ballots in batch $p$. $N_s = \sum_{p=1}^P n_p$.
    \item  $v_{pi} \in \{0, 1\}$: the reported votes for candidate $i$ in batch $p$
    \item  $a_{pi} \in \{0, 1\}$: actual votes for candidate $i$ in batch $p$. 
                  If the contest does not appear on any ballot in batch $p$, then $a_{pi} = 0$.
                  
    \item  $V_{w\ell,s} \equiv \sum_{p=1}^P (v_{pw} - v_{p\ell})$: 
Reported margin in stratum $s$ of reported winner $w \in \mathcal{W}$ over reported loser 
$\ell \in \mathcal{L}$, in votes.

    \item  $V_{w\ell}$: 
Overall reported margin of reported winner $w \in \mathcal{W}$ over reported loser 
$\ell \in \mathcal{L}$, in votes, for the entire contest (not just stratum $s$)

%
%
    \item  $V$: smallest reported overall margin between any reported winner and reported loser:
$V \equiv \min_{w \in \mathcal{W}, \ell \in \mathcal{L}} V_{w \ell}$

    \item  $A_{w\ell,s} \equiv \sum_{p=1}^P (a_{pw} - a_{p\ell})$: 
actual margin in the stratum of reported winner $w \in \mathcal{W}$ over reported loser 
$\ell \in \mathcal{L}$, in votes

    \item  $A_{w\ell}$: 
actual margin of reported winner $w \in \mathcal{W}$ over reported loser 
$\ell \in \mathcal{L}$, in votes, for the entire contest (not just in stratum $s$)

\end{itemize}

\subsection{Reduction to maximum relative overstatement}
If the contest is entirely contained in stratum $s$, then
the reported winners of the contest are the actual winners if
$$ 
   \min_{w \in \mathcal{W}, \ell \in \mathcal{L}} A_{w\ell,s} > 0.
$$
Here, we address the case that the contest may include a portion outside the stratum.
To combine independent samples in different strata, it is convenient
to be able to test whether the net overstatement error in a stratum exceeds a given threshold.

Instead of testing that condition directly, we will test a condition that is sufficient 
but not necessary for the inequality to hold, to get a computationally simple test that
is still conservative (i.e., the risk is not larger than its nominal value).

For every winner, loser pair $(w, \ell)$, we want to test
whether the overstatement error exceeds some threshold, generally
one tied to the reported margin between $w$ and $\ell$.
For instance, for a simple stratified audit, we might take the threshold to be
$\lambda_s V_{w\ell}$.

We want to test whether
$$
   \sum_{p=1}^P (v_{pw}-a_{pw} - v_{p\ell} + a_{p\ell})/V_{w\ell} \ge \lambda_s.
$$
The maximum of sums is not larger than the sum of the maxima; that is,
$$
\max_{w \in \mathcal{W},  \ell \in \mathcal{L}}
   \sum_{p=1}^P (v_{pw}-a_{pw} - v_{p\ell} + a_{p\ell})/V_{w\ell}
   \le
  \sum_{p=1}^P  \max_{w \in \mathcal{W},  \ell \in \mathcal{L}} 
  (v_{pw}-a_{pw} - v_{p\ell} + a_{p\ell})/V_{w\ell}.
$$

Define 
$$
  e_p \equiv \max_{w \in \mathcal{W} \ell \in \mathcal{L}} 
     (v_{pw}-a_{pw} - v_{p\ell} + a_{p\ell})/V_{w\ell}.
$$
Then no reported margin is overstated by a fraction $\lambda_s$ or more
if 
$$ 
  E \equiv \sum_{p=1}^P e_p < \lambda_s.
$$

Thus if we can reject the hypothesis $E \ge \lambda_s$, we can conclude that
no pairwise margin was overstated by as much as a fraction $\lambda_s$.

Testing whether $E \ge \lambda_s$ would require a very large sample if we knew nothing at
all about $e_p$ without auditing batch $p$: a single large value of $e_p$ could make
$E$ arbitrarily large.
But there is an \emph{a priori} upper bound for $e_p$.
Whatever the reported votes $v_{pi}$ are in batch~$p$, we can find the
potential values of the actual votes $a_{pi}$ that would make the
error $e_p$ largest, because $a_{pi}$ must be between 0 and $n_p$,
the number of ballots in batch~$p$:
$$
    \frac{v_{pw}-a_{pw} - v_{p\ell} + a_{p\ell}}{V_{w\ell}} \le 
    \frac{v_{pw}- 0 - v_{p\ell} + n_p}{V_{w\ell}}.
$$
Hence,
\begin{equation} \label{eq:uDef}
    e_p \le \max_{w \in \mathcal{W}, \ell \in \mathcal{L}} 
    \frac{v_{pw} - v_{p\ell} + n_p}{V_{w\ell}} \equiv u_p.
\end{equation}

Knowing that $e_p \le u_p$ might let us conclude reliably that $E < \lambda_s$
by examining only a small number of batches---depending on the 
values $\{ u_p\}_{p=1}^P$ and on the values of $\{e_p\}$ for the audited batches.

To make inferences about $E$, it is helpful to work with the \emph{taint} 
$t_p \equiv \frac{e_p}{u_p} \le 1$.
Define $U \equiv \sum_{p=1}^P u_p$.
Suppose we draw batches at random with replacement, with probability $u_p/U$
of drawing batch $p$ in each draw, $p = 1, \ldots, P$.
(Since $u_p \ge 0$, these are all positive numbers, and they sum to 1,
so they define a probability distribution on the $P$ batches.)

Let $T_j$ be the value of $t_p$ for the batch $p$ selected in the $j$th draw.
Then $\{T_j\}_{j=1}^n$ are IID, $\mathbb{P} \{T_j \le 1\} = 1$, and
$$
  \mathbb{E} T_1 = \sum_{p=1}^P \frac{u_p}{U} t_p =
  \frac{1}{U}\sum_{p=1}^P u_p \frac{e_p}{u_p} = 
  \frac{1}{U} \sum_{p=1}^P e_p = E/U.
$$
Thus $E = U \mathbb{E} T_1$. 

So, if we have strong evidence that
$\mathbb{E} T_1 < \lambda_s/U$, we have
strong evidence that $E < \lambda_s$.

This approach can be simplified even further by noting that $u_p$ has
a simple upper bound that does not depend on $v_{pi}$.
At worst, the reported result for batch $p$ shows $n_p$ votes for the 
``least-winning'' apparent winner of the contest with the smallest margin, 
but a hand interpretation would show that all $n_p$ ballots in the batch 
had votes for the runner-up in that contest.
Since $V_{w\ell} \ge V$ and $0 \le v_{pi} \le n_p$,
$$ 
    u_p =  \max_{w \in \mathcal{W}, \ell \in \mathcal{L}} 
    \frac{v_{pw} - v_{p\ell} + n_p}{V_{w\ell}}
    \le  \max_{w \in \mathcal{W}, \ell \in \mathcal{L}} 
    \frac{n_p - 0 + n_p}{V_{w\ell}}
    \le \frac{2n_p}{V}.
$$
Thus if we use $2n_p/V$ in lieu of $u_p$, we still get conservative results.
(We also need to re-define $U$ to be the sum of those upper bounds.)
An intermediate, still conservative approach would be to use this upper bound for
batches that consist of a single ballot, but use the sharper bound (\ref{eq:uDef})
when $n_p > 1$.
Regardless, for the new definition of $u_p$ and $U$,
$\{T_j\}_{j=1}^n$ are IID, $\mathbb{P} \{T_j \le 1\} = 1$,
and
$$
  \mathbb{E} T_1 = \sum_{p=1}^P \frac{u_p}{U} t_p =
  \frac{1}{U}\sum_{p=1}^P u_p \frac{e_p}{u_p} = 
  \frac{1}{U} \sum_{p=1}^P e_p = E/U.
$$

So, if we have evidence that $\mathbb{E} T_1 < \lambda_s/U$, we have evidence that 
$E < \lambda_s$.

\subsection{Testing $\mathbb{E} T_1 \ge \lambda_s/U$}

To test whether $\mathbb{E} T_1 < \lambda_s/U$, there are a variety of methods available.
One particularly ``clean'' sequential method is based on Wald's Sequential Probability
Ratio Test (SPRT) (\cite{wald45}).
Harold Kaplan pointed out this method on a website that no longer exists.
A derivation of this ``Kaplan-Wald'' method is given in~\citet[Appendix A]{starkTeague14};
to apply the method here, take $t = \lambda_s$ in their equation~18.

A different sequential method, the Kaplan-Markov method (also due to Harold Kaplan), 
is given in~\citet{stark09b}.

\section{Ballot-polling audits of a tolerable overstatement in votes}
\label{sec:ballotPollError}

\subsection{Conditional tri-hypergeometric test}

We consider a single stratum $s$, containing $N_s$ ballots.
We will sample individual ballots without replacement from stratum $s$.
Of the $N_s$ ballots,
$A_{w,s}$ have a vote for $w$ but not for $\ell$, $A_{\ell,s}$ have a vote for $\ell$ but not for $w$, and $A_{u,s} = N_s - N_{w,s} - N_{\ell,s}$ have votes for both $w$ and $\ell$ or neither $w$ nor $\ell$, including undervotes and invalid ballots.
We might draw a simple random sample of $n$ ballots ($n$ fixed ahead of time), or we might draw 
sequentially without replacement, so the sample size $B$ could be random.
For instance, the rule for determining $B$ could depend on the data.\footnote{%
   Sampling with replacement leads to simpler arithmetic, but is not as efficient.
}

Regardless, we assume that, conditional on the attained sample size $n$, the ballots are a simple random sample of size $n$ from the $N_s$ ballots in the population.
In the sample, $B_w$ ballots contain a vote for $w$ but not $\ell$, with $B_\ell$ and $B_u$ defined analogously.
Then, conditional on $B=n$, the joint distribution of
$(B_w, B_\ell, B_u)$ is tri-hypergeometric:

\begin{equation}
    \mathbb{P}_{A_{w,s}, A_{\ell,s}} \{ B_w = i, B_\ell = j \vert B=n \} = 
     \frac{ {A_{w,s } \choose i}{A_{\ell,s} \choose j}{N_s - A_{w,s} - A_{\ell,s} \choose n-i-j}}{{N_s \choose n}}.
\end{equation}

The test statistic will be the diluted sample margin, $D \equiv (B_w - B_\ell)/B$.
This is the sample difference in the number of ballots for the winner and for the loser, divided by the 
total number of ballots in the sample.
We want to test the compound hypothesis $A_{w,s} - A_{\ell,s} \le c$.
The value of $c$ is inferred from the definition
$\omega_{w\ell,s} \equiv V_{w\ell,s} - A_{w\ell,s} = V_{w,s} - V_{\ell,s} - (A_{w,s} -A_{\ell,s})$.
Thus,
$$
    c = V_{w,s} - V_{\ell,s} - \omega_{w\ell,s} = V_{w\ell,s} - \lambda_s V_{w\ell}.
$$
The alternative is the compound hypothesis 
$A_{w,s} - A_{\ell,s} > c$.\footnote{%
    To use Wald's Sequential Probability Ratio Test, we might pick a simple alternative instead, e.g.,
   $A_{w,s} = V_{w,s}$ and $A_{\ell,s} = V_{\ell,s}$, the reported values, provided 
   $V_{w,s} - V_{\ell,s} > c$.
}
Hence, we will reject for large values of $D$.
Conditional on $B=n$, the event $D = (B_w - B_\ell)/B = d$ is the event $B_w - B_\ell = nd$.

Suppose we observe $D=d$.
The test will condition on the event $B=n$. 
(In contrast, the BRAVO ballot-polling
method~\citep{lindemanEtal12}
conditions only on $B_w+B_\ell = m$.)

The $p$-value of the simple hypothesis that there are $A_{w,s}$ ballots with
a vote for $w$ but not for $\ell$, $A_{\ell,s}$ ballots with a vote for $\ell$ but not for $w$, 
and $N - A_{w,s} - A_{\ell,s}$ ballots with votes for both $w$ and $\ell$ or neither $w$ nor $\ell$ 
(including undervotes and
invalid ballots) is the sum of these probabilities for events when $B_w - B_\ell \geq nd$.
Therefore,

\begin{equation}
   \mathbb{P}_{A_{w,s}, A_{\ell,s}, N_s} \left \{ D \geq d \;\vert\; B = n\right \} = 
   \sum_{\substack{(i, j) :  i, j\ge 0 \\ i-j \geq nd \\ i+j \leq n}} \frac{ {A_{w,s } \choose i}{A_{\ell,s} \choose j}{N_s - A_{w,s} - A_{\ell,s} \choose n-i-j}}{{N_s \choose n}}.
\end{equation}

\subsection{Conditional hypergeometric test}
Another approach is to condition on both the events $B=n$ and $B_w+B_\ell=m$.
We describe the hypothesis test here, but do not advocate for using it.
We found that this approach was inefficient in some simulation experiments.

Given $B=n$, all samples of size $n$ from the ballots are equally likely, by hypothesis.
Hence, in particular, all samples of size $n$ for which $B_w + B_\ell = m$ are equally likely.
There are ${A_{w,s}+A_{\ell,s} \choose m}{N_s - A_{w,s}-A_{\ell,s} \choose n-m}$ such samples.
Among these samples, $B_w$ may take values $i=0, 1, \dots, m$.
For a fixed $i$, there are ${A_{w,s} \choose i}{A_{\ell, s} \choose m-i}{N_s - A_{w,s} - A_{\ell,s} \choose n-m}$
samples with $B_w=i$ and $B_\ell = m-i$.

The factor ${N_s - A_{w,s} - A_{\ell,s} \choose n-m}$ counts the number of ways to sample $n-m$ of the
remaining ballots.
If we divide out this factor, we simply count the number of ways to sample ballots
from the group of ballots for $w$ or for $\ell$.
There are ${A_{w,s}+A_{\ell,s} \choose m}$ equally likely samples of size $m$ from
the ballots with either a vote for $w$ or for $\ell$, but not both, 
and of these samples, ${A_{w,s} \choose i}{A_{\ell, s} \choose m-i}$ contain $i$ ballots with a vote for $w$ but not $\ell$.
Therefore, conditional on $B=n$ and $B_w+B_\ell=m$, the probability that $B_w=i$ is

$$\frac{{A_{w,s} \choose i}{A_{\ell, s} \choose m-i}}{{A_{w,s}+A_{\ell,s} \choose m}}.$$

The $p$-value of the simple hypothesis that there are $A_{w,s}$ ballots with
a vote for $w$ but not for $\ell$, $A_{\ell,s}$ ballots with a vote for $\ell$ but not for $w$, 
and $N - A_{w,s} - A_{\ell,s}$ ballots with votes for both $w$ and $\ell$ or neither $w$ nor $\ell$ 
(including undervotes and
invalid ballots) is the sum of these probabilities for events when $B_w - B_\ell \geq nd$.
This event occurs for $B_w \geq \frac{m+nd}{2}$.
Therefore,

\begin{equation}
   \mathbb{P}_{A_{w,s}, A_{\ell,s}, N_s} \left \{ D \geq d \;\vert\; B = n, B_w+B_\ell = m \right \} = 
   \sum_{i=(m+nd)/2}^{\min\{m, A_{w,s}\}} \frac{{A_{w,s} \choose i}{A_{\ell, s} \choose m-i}}{{A_{w,s}+A_{\ell,s} \choose m}}.
\end{equation}

This conditional $p$-value is thus the tail probability of the hypergeometric distribution
with parameters $A_{w,s}$ ``good'' items, $A_{\ell,s}$ ``bad'' items, and a sample of size $m$.
This calculation is numerically stable and fast; tail probabilities of the hypergeometric distribution are available
and well-tested in all standard statistics software.

\subsection{Maximizing the $p$-value over the null set}

The composite null hypothesis does not specify $A_{w,s}$ or $A_{\ell,s}$ separately, only 
that $A_{w,s} - A_{\ell,s} \le c$ for
some fixed, known $c$.
The (conditional) $p$-value of this composite hypothesis for $D=d$ is the maximum $p$-value for all
values $(A_{w,s}, A_{\ell,s})$ that are possible under the null hypothesis,
\begin{equation}
  \max_{A_{w,s}, A_{\ell,s} \in \{0, 1, \ldots, N \}: A_{w,s} - A_{\ell,s} \le c, A_{w,s} + A_{\ell,s} \le N_s}
   \sum_{\substack{(i, j) :  i, j\ge 0 \\ i-j \geq nd \\ i+j \leq n}} \frac{ {A_{w,s } \choose i}{A_{\ell,s} \choose j}{N_s - A_{w,s} - A_{\ell,s} \choose n-i-j}}{{N_s \choose n}},
\end{equation}
wherever the summand is defined. 
(Equivalently, define ${m \choose k} \equiv 0$ if $k > m$, $k < 0$, or $m \le 0$.)

\subsubsection{Optimizing over the parameter $c$}
The following result enables us to only test hypotheses along the boundary of the null set.

\begin{thm}
Assume that $n < A_{w,s}+A_{\ell,s}$.
Suppose the composite null hypothesis is $N_w - N_\ell \leq c$.
The $p$-value is maximized on the boundary of the null region, i.e. when $N_w - N_\ell = c$.
\end{thm}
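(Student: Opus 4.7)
The plan is to establish two coordinate-wise monotonicity properties of the conditional $p$-value and then use them to walk any feasible $(A_{w,s}, A_{\ell,s})$ in the null region up to the boundary $A_{w,s} - A_{\ell,s} = c$ without decreasing the $p$-value. Writing $P(A_{w,s}, A_{\ell,s})$ for the conditional $p$-value with $N_s$, $n$, and $d$ held fixed, the two key inequalities are (a) $P(A_{w,s} + 1, A_{\ell,s}) \ge P(A_{w,s}, A_{\ell,s})$ whenever $A_{w,s} + A_{\ell,s} < N_s$, and (b) $P(A_{w,s}, A_{\ell,s} - 1) \ge P(A_{w,s}, A_{\ell,s})$ whenever $A_{\ell,s} \ge 1$.

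For (a), the natural tool is a relabeling coupling. Begin with the $N_s$-ball urn that defines $P(A_{w,s}, A_{\ell,s})$: $A_{w,s}$ balls marked $w$, $A_{\ell,s}$ marked $\ell$, and the remaining $N_s - A_{w,s} - A_{\ell,s} \ge 1$ marked ``other.'' Relabel exactly one ``other'' ball as $w$ and then draw a simple random sample of size $n$ without replacement. The joint law of the sample counts in the relabeled urn coincides with the tri-hypergeometric law governing $P(A_{w,s} + 1, A_{\ell,s})$. Conditional on whether the relabeled ball ends up in the sample, either $B_w$ rises by $1$ and $B_u$ drops by $1$, or nothing changes; $B_\ell$ is untouched in either case. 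Thus $B_w - B_\ell$ is pointwise weakly larger after relabeling, which forces $\mathbb{P}\{B_w - B_\ell \ge nd\}$ to be weakly larger as well. Inequality (b) follows from the mirror construction, relabeling one ``$\ell$'' ball as ``other,'' so that $B_\ell$ weakly decreases while $B_w$ is untouched.

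Given (a) and (b), I would finish with a finite path argument. Let $(A_{w,s}, A_{\ell,s})$ be any feasible point with $A_{w,s} - A_{\ell,s} < c$. If $A_{w,s} + A_{\ell,s} < N_s$, move to $(A_{w,s} + 1, A_{\ell,s})$ and invoke (a); otherwise $A_{w,s} + A_{\ell,s} = N_s$, and since $c \le N_s$ and $A_{w,s} - A_{\ell,s} < c$, necessarily $A_{\ell,s} \ge 1$, so move to $(A_{w,s}, A_{\ell,s} - 1)$ and invoke (b). Each step preserves feasibility, strictly increases $A_{w,s} - A_{\ell,s}$ by one, and weakly increases $P$; after finitely many steps the path reaches a point on the boundary $A_{w,s} - A_{\ell,s} = c$ with $p$-value no smaller than at the starting point. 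This shows the supremum of $P$ over the null region is attained on the boundary.

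The main obstacle is getting the coupling right: because the null is composite in two parameters, a naive one-dimensional stochastic-order argument would not simultaneously handle both marginals $B_w$ and $B_\ell$, but the relabeling coupling controls them jointly by construction and yields the required pointwise (rather than merely distributional) comparison of $B_w - B_\ell$. The side condition $n < A_{w,s} + A_{\ell,s}$ in the statement plays no role in the monotonicity inequalities themselves; it only rules out degenerate edge cases in which the sample can exhaust all votes for $w$ or $\ell$ and the tail probability becomes trivial.
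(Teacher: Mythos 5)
Your proof is correct, but it takes a genuinely different route from the paper's. The paper fixes $A_{u,s}$ (equivalently $N_{w\ell,s}=A_{w,s}+A_{\ell,s}$) and compares the tri-hypergeometric tail at difference $0$ with the tail at difference $c$ term by term, writing the ratio of summands as a product of factors of the form $\bigl(1+\tfrac{i}{N_{w\ell,s}/2+a-i}\bigr)\bigl(1-\tfrac{j}{N_{w\ell,s}/2-a}\bigr)$ and bounding it below by $1$; the hypothesis $n<A_{w,s}+A_{\ell,s}$ is what keeps the denominators positive and makes the inequality strict. You instead establish coordinate-wise stochastic monotonicity by a relabeling coupling --- turning one ``other'' ball into a $w$ ball, or one $\ell$ ball into an ``other'' ball, so that $B_w-B_\ell$ increases pointwise on the coupled sample --- and then walk any null point to the boundary along a lattice path. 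Your argument buys several things: it avoids the parity issues implicit in the paper's use of $N_{w\ell,s}/2$ and $(N_{w\ell,s}\pm c)/2$, it needs no side condition (as you note, $n<A_{w,s}+A_{\ell,s}$ is irrelevant to the weak inequalities), and it genuinely covers the full two-dimensional null set $\{A_{w,s}-A_{\ell,s}\le c,\ A_{w,s}+A_{\ell,s}\le N_s\}$ rather than one slice with $A_{u,s}$ fixed at a time; indeed your monotonicity in $A_{w,s}$ is exactly the kind of statement the paper only asserts empirically in the following subsection. What the paper's computation buys in exchange is a strict inequality $p_c>p_0$, whereas your coupling yields only weak inequalities --- which is all that is needed for ``maximized on the boundary,'' but is slightly weaker as a statement about uniqueness.
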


\begin{proof}
Without loss of generality, let $c=0$ and assume that $A_{u,s}=N_s - A_{w,s} - A_{\ell,s}$ is fixed.
Let $N_{w\ell, s} \equiv A_{w,s}+A_{\ell,s}$ be the fixed, unknown number of ballots for $w$ or for $\ell$ in stratum $s$.
The $p$-value $p_0$ for the simple hypothesis that $c=0$ is

\begin{equation}
  p_0 = \sum_{\substack{(i, j) :  i, j\ge 0 \\ i-j \geq nd \\ i+j \leq n}} \frac{ {N_{w\ell, s}/2 \choose i}{N_{w\ell, s}/2 \choose j}{A_{u,s} \choose n-i-j}}{{N_s \choose n}} =  \sum_{\substack{(i, j) :  i, j\ge 0 \\ i-j \geq nd \\ i+j \leq n}}T_{ij},
\end{equation}

where $T_{ij}$ is defined as the $(i, j)$ term in the summand and $T_{ij} \equiv 0$ for pairs $(i, j)$ that don't appear in the summation.

Assume that $c>0$ is given.
The $p$-value $p_c$ for this simple hypothesis is
\begin{align*}
p_c &=   \sum_{\substack{(i, j) :  i, j\ge 0 \\ i-j \geq nd \\ i+j \leq n}} \frac{ {(N_{w\ell, s}+c)/2 \choose i}{(N_{w\ell, s}-c)/2 \choose j}{A_{u,s} \choose n-i-j}}{{N_s \choose n}}  \\
   &= \sum_{\substack{(i, j) :  i, j\ge 0 \\ i-j \geq nd \\ i+j \leq n}}T_{ij} \frac{ \frac{N_{w\ell, s}+c}{2}(\frac{N_{w\ell, s}+c}{2}-1)\cdots(\frac{N_{w\ell, s}}{2}+1) (\frac{N_{w\ell, s}-c}{2} -j)\cdots(\frac{N_{w\ell, s}}{2}-1-j) }
   {(\frac{N_{w\ell, s}+c}{2} -i)\cdots(\frac{N_{w\ell, s}}{2}+1-i)(\frac{N_{w\ell, s}-c}{2})\cdots(\frac{N_{w\ell, s}}{2}-1)}.
\end{align*}

Terms in the fraction can be simplified: choose the corresponding pairs in the numerator and denominator.
Fractions of the form $\frac{\frac{N_{w\ell, s}}{2} + a}{\frac{N_{w\ell,s}}{2} + a - i}$ can be expressed as $1 + \frac{i}{\frac{N_{w\ell,s}}{2} + a-i}$.
Fractions of the form $\frac{\frac{N_{w\ell, s}}{2}  - a - j}{\frac{N_{w\ell, s}}{2}  - a}$ can be expressed as $1 - \frac{j}{\frac{N_{w\ell, s}}{2} -a}$.
Thus, the $p$-value can be written as 

\begin{align*}
p_c &= \sum_{\substack{(i, j) :  i, j\ge 0 \\ i-j \geq nd \\ i+j \leq n}}T_{ij} \prod_{a=1}^{c/2} \left(1 + \frac{i}{\frac{N_{w\ell,s}}{2} + a-i}\right)\left(1 - \frac{j}{\frac{N_{w\ell, s}}{2} - a}\right) \\
&> \sum_{\substack{(i, j) :  i, j\ge 0 \\ i-j \geq nd \\ i+j \leq n}} T_{ij} \left[ \left(1 + \frac{i}{\frac{N_{w\ell,s}+c}{2} -i}\right)\left(1 - \frac{j}{\frac{N_{w\ell, s}}{2}+1}\right) \right]^{c/2} \\
&= \sum_{\substack{(i, j) :  i, j\ge 0 \\ i-j \geq nd \\ i+j \leq n}} T_{ij} \left[ 1 + \frac{\frac{N_{w\ell,s}+c}{2}j + \frac{N_{w\ell,s}}{2}i + i}{(\frac{N_{w\ell,s}+c}{2}-i)(\frac{N_{w\ell,s}}{2}+1)}\right]^{c/2} \\
&> \sum_{\substack{(i, j) :  i, j\ge 0 \\ i-j \geq nd \\ i+j \leq n}} T_{ij}\\
&= p_0
\end{align*}

The last inequality follows from the fact that $i$ and $j$ are nonnegative, and 
that $i < \frac{N_{w\ell,s}+c}{2}$ -- it is a possible outcome under the null hypothesis.

\end{proof}

\subsubsection{Optimizing over the parameter $A_{w,s}$}

We have shown empirically (but do not prove) that this tail probability, as a function of $A_{w,s}$,
has a unique maximum at one of the endpoints when $A_{w,s}$ is either as small or as large as possible,
given $N$, $c$, and the observed sample values $B_w$ and $B_\ell$.
If the empirical result is true, then finding the maximum is trivial;
otherwise, it is a trivial one-dimensional optimization problem to compute the unconditional $p$-value.

\subsection{Conditional testing}
If the conditional tests are always conducted at significance level $\alpha$ or less, i.e., so that
$\mathbb{P} \{\mbox{Type I error} | B = n\} \le \alpha$, then the
overall procedure has significance level $\alpha$ or less:
\begin{eqnarray}
    \mathbb{P} \{\mbox{Type I error}\} &=& \sum_{n=0}^N \{\mbox{Type I error} |  B = n\} \mathbb{P} \{ B = n \} \nonumber \\
       & \le & \sum_{n=0}^N \alpha \mathbb{P} \{  B = n \}  =  \alpha.
\end{eqnarray}

In particular, this implies that our conditional hypergeometric test will have the correct risk limit unconditionally.

\section{Recommendations} \label{sec:recommendations}

We have outlined several methods Colorado might use to audit cross-jurisdictional contests
that include CVR counties and no-CVR counties.
We expect that stratified ``hybrid'' audits will be the most palatable,
given the constraints on time for software development and the logistics
of the audit itself, because the workflow for counties would be the same
as it was in November, 2017.

What would change is the risk calculation ``behind the scene,'' including the
algorithms used to decide when the audit can stop.
Those algorithms could be implemented in software external to RLATool.
The minimal modification to RLATool that would be required to conduct
a hybrid audit is to allow the sample size from each county to be controlled externally,
e.g. by uploading a parameter file once per round,
rather than using a formula that is based on the margin within that county alone.
The parameter file would be generated by external software that does the
audit calculations described here based on the detailed
audit progress and discrepancy data available from RLATools' rla\_export command.

To conduct a hybrid audit, one must choose two numbers
in addition to the risk limit $\alpha$:
\begin{itemize}
   \item one stratum-wise risk limit, $\alpha_1$ 
(the other, $\alpha_2$, is determined from $\alpha_1$ and the overall risk limit, $\alpha$)
   \item the tradeoff (allocation) of the tolerable overstatement between strata, $\lambda_1$
(the value of $\lambda_2$ is $1-\lambda_1$)
\end{itemize}
Those parameters can be chosen essentially arbitrarily (provided $\alpha_1 \le \alpha$)
and the audit will still be risk-limiting;
however, they can be optimized to reduce the expected workload under various
assumptions about tabulation errors in the two strata.
(Software that can be used to run scenarios is available at 
\url{https://www.github.com/pbstark/CORLA18}; see below.)

In either stratum, increasing the risk limit or increasing the tolerable overstatement will decrease
the required sample size from that stratum (assuming that the actual overstatement in
that stratum is less than its allowable overstatement).
The relative change in sample size as the risk limit changes scales similarly in the two strata, 
because the risk limit 
enters both ballot-level comparisons and ballot-polling the same way: as the logarithm.
However, the relative change in sample sizes as the tolerable overstatement changes
scales quite differently in the two strata: linearly in the ballot-level comparison stratum,
but quadratically in the ballot-polling stratum.
Hence, the workload is not as sensitive to how the risk limit is allocated across strata
as it is to how the tolerable overstatement is allocated.

\subsection{Software and examples}
Examples of stratified hybrid audits are in Jupyter notebooks available
at \url{https://www.github.com/pbstark/CORLA18}.
The first two examples are contained in a single notebook, ``hybrid-audit-example-1''.
The first example is a hypothetical medium-sized election with a total of 
$110,000$ votes and a diluted margin of $1.8\%$.
9.1\% of the ballots come from no-CVR counties. The risk limit is 10\%.
If the audit in the CVR stratum found no errors and the allowable overstatement error was 30\% of the margin, 
it would terminate after examining 1,213 ballots.
In over 90\% of 10,000 simulations, an audit of 250 ballots from the no-CVR stratum
would have sufficed to confirm that the overstatement error in that stratum
did not exceed its allocation, 70\% of the margin.
A sample of 450 ballots was sufficient to stop the audit in 99\% of simulations.
As always, $\lambda_1$ could be adjusted to
rebalance the expected workload between strata, perhaps taking into account the expected
workload for audits of countywide or intra-county contests, so as to minimize (or quite possibly
eliminate) any additional burden imposed by the stratified audit.

If a CVR were available for all counties and we could have run a ballot-level comparison audit for the entire contest, 
rather than stratifying, an audit with risk limit 10\% that found no errors would have concluded after examining just 263 ballots.
The efficiency gained comes from two sources.
First, ballot-level comparison audits are substantially more efficient than ballot-polling audits.
Second, the hybrid audit requires dividing the margin and risk limit between two strata.
This results in both strata using smaller risk limits.
In order to keep the workload low, it is necessary to allocate a disproportionately high fraction of the margin
to the no-CVR stratum;
the CVR stratum must increase its workload to compensate.

Another method discussed in Section~\ref{sec:two-vote-over} is to perform a ballot-level comparison audit statewide,
but to treat any ballot sampled from the no-CVR county as showing a two-vote overstatement.
In this example, this worst-case method would lead to a full hand count.
However, the situation may be more optimistic for Colorado:
if only 1.2\% of ballots came from the no-CVR stratum and the overall margin were in fact 10,000 votes,
then this method would require checking 430 ballots.

The second example is a hypothetical large statewide election with a total of 
2~million ballots and a diluted margin of nearly $20\%$.  The risk limit is 5\%.
If the audit in the CVR stratum found no errors and the allowable overstatement error was 10\% of the margin, 
it would terminate after examining 50 ballots.
In over 90\% of 10,000 simulations, an audit of 50 ballots from the no-CVR stratum
would have sufficed to confirm that the overstatement error in that stratum
did not exceed its allocation, 90\% of the margin.
A sample of 100 ballots was sufficient to stop the audit in 99\% of simulations.
If a CVR were available for all counties and we could have run a ballot-level comparison audit for the entire contest, 
rather than stratifying, an audit with risk limit 5\% that found no errors would have concluded after examining just 24 ballots.

A second notebook, ``hybrid-audit-example-2,'' illustrates the workflow for conducting a hybrid audit of this kind.
The example election has a total of 2~million ballots.
The reported margin is just over $1\%$, but in reality the vote totals for the reported winner
and reported loser are identical in both strata.  The risk limit is 5\%.
The example illustrates two scenarios.
In the first scenario, the audit in the CVR stratum escalates to a full hand count and the allowable overstatement
in the no-CVR stratum must be adjusted.
Using the new allowable overstatement in the no-CVR stratum makes it impossible to terminate the audit,
even for samples as large as 5\% of the ballots.
In the second scenario, the audit in the no-CVR stratum terminates with a sample of 500 ballots.
However, the audit in the CVR stratum will still lead to a full hand count and the audit in the no-CVR
stratum must be redone using the adjusted allowable overstatement, putting us back in the first scenario.
In both cases, the audit leads to a full recount of all the ballots.

These notebooks can be modified and run with different contest sizes, margins, risk limits, and allocations of allowable error, in order to
estimate the workload of different scenarios.

\bibliography{./pbsBib}

\end{document}